\documentclass[11pt,a4paper]{article}

\usepackage{lmodern}
\usepackage[T1]{fontenc}
\usepackage[utf8]{inputenc}
\usepackage[english]{babel}

\usepackage{amsmath,amssymb,amsthm}
\usepackage{geometry}
\usepackage{graphicx}
\usepackage{booktabs}
\usepackage{tikz}
\usetikzlibrary{arrows.meta,positioning,shapes.geometric,calc,decorations.pathmorphing,fit,backgrounds}
\usepackage{pgfplots}
\pgfplotsset{compat=1.18}
\usepackage{microtype}
\usepackage[colorlinks=true,linkcolor=blue,citecolor=blue,urlcolor=blue]{hyperref}

\theoremstyle{plain}
\newtheorem{theorem}{Theorem}[section]
\newtheorem{proposition}[theorem]{Proposition}

\theoremstyle{definition}
\newtheorem{definition}[theorem]{Definition}
\newtheorem{example}[theorem]{Example}
\theoremstyle{remark}
\newtheorem{remark}[theorem]{Remark}

\newcommand{\dphi}{\mathcal{D}\varphi}
\newcommand{\dhk}{\mathcal{D}_{\mathrm{HK}}\varphi}
\newcommand{\R}{\mathbb{R}}
\newcommand{\C}{\mathbb{C}}
\newcommand{\N}{\mathbb{N}}
\let\leanoldunderscore\_
\newcommand{\leanbreakunderscore}{\leanoldunderscore\allowbreak}
\newcommand{\lean}[1]{{\ttfamily\hyphenchar\font=-1\relax
  \let\_\leanbreakunderscore #1}}

\title{\bfseries Henstock--Kurzweil Gauge Integral in the Non-Gaussian Regime:\\
A Machine-Verified Construction}
\author{Yuri N. Berdinsky\\[2mm]
\small Saint Petersburg State University, Faculty of Physics,\\
\small Department of High-Energy Physics and Elementary Particles\\
\small \texttt{propagator2007@yandex.ru}}
\date{08.09.2026}

\begin{document}
\sloppy
\setlength{\emergencystretch}{3em}
\maketitle

\begin{abstract}
\noindent
This preprint develops, in a deliberately tutorial style, the general mathematical
apparatus needed to give a rigorous meaning to \emph{non-Gaussian} functional integrals
using the Henstock--Kurzweil (HK) gauge integral together with Chernoff product
approximations. The central object is a finite family of bosonic modes
$\varphi=(\varphi_1,\dots,\varphi_M)$ with action
$S(\varphi)=\tfrac12\varphi^{T}A\varphi+\lambda\sum_i\varphi_i^4$, $A$ positive definite,
$\lambda\ge 0$. We recall in detail the classical route to functional integration
(Wick rotation, Wiener measure, the Cameron--Martin obstruction, $\zeta$-regularisation,
analytic continuation) and contrast it with the direct real-time gauge route, in which
the formal symbol $\dphi$ is \emph{defined} by cylindrical partitions and a gauge
$\delta(\cdot)>0$, so that $\int\dphi\,F[\varphi]=\int\dhk\,F[\varphi]$. We prove that
the one-mode non-Gaussian integral $I(\omega,j,\lambda)$ is finite, strictly positive,
monotone and infinitely differentiable in the coupling $\lambda$ on $[0,\infty)$ --- with
derivatives given by integrals of $\varphi^{4k}$ against the same weight and \emph{without}
using the (divergent) perturbative series --- that the $M$-mode influence functional
factorises into one-mode integrals and is bounded by its Gaussian value, and that a
Chernoff/Lie--Trotter splitting handles the non-commutativity of the free and
non-Gaussian generators. All these statements are formalised and machine-checked in
Lean~4 with Mathlib; the accompanying file \lean{HkNonGaussian.lean} is free of
\lean{sorry} and uses only the standard axioms. Four illustrative applications
(Duffing oscillator, local-volatility finance, Wilson--Cowan neural fields, non-Gaussian
quantum reservoirs) are worked out at the level of explicit formulas.

\medskip
\noindent\textbf{Keywords:} Henstock--Kurzweil integral; gauge integral; functional
integral; non-Gaussian; quartic potential; Chernoff approximation; Lie--Trotter formula;
Lean~4; formal verification.
\end{abstract}

\tableofcontents

\section{Introduction}
\label{sec:intro}

\subsection{Why one must go beyond Gaussian integrals}

Almost every explicit computation in quantum field theory, statistical mechanics,
stochastic analysis and mathematical finance ultimately rests on a single elementary
formula: the finite-dimensional Gaussian integral
\begin{equation}
\label{eq:gauss-basic}
\int_{\R^{n}} e^{-\frac12\varphi^{T}A\varphi}\,d^{n}\varphi
=(2\pi)^{n/2}(\det A)^{-1/2},
\qquad A=A^{T}>0 .
\end{equation}
Everything else --- propagators, Wick's theorem, Feynman diagrams, Gaussian processes,
Black--Scholes prices --- is a corollary of \eqref{eq:gauss-basic} and of its version with
a source term. As soon as the action acquires a genuinely non-quadratic piece, for
instance a quartic self-interaction
\begin{equation}
\label{eq:action}
S(\varphi)=\tfrac12\,\varphi^{T}A\varphi+\lambda\sum_{i=1}^{M}\varphi_i^{4},
\qquad \lambda\ge 0,
\end{equation}
the standard toolbox stops producing convergent answers. The customary reaction is to
expand $e^{-\lambda\sum_i\varphi_i^4}$ in powers of $\lambda$ and to integrate term by
term. The resulting series
$\sum_k c_k\lambda^k$ has \emph{zero radius of convergence}: it is an asymptotic series
only, its coefficients grow factorially, and no amount of algebra converts it into a
number.

The message of the present preprint is that this failure is a failure of a \emph{method},
not of the object. The integral
\[
\int \dphi \; e^{-S[\varphi]}
\]
exists, is finite, and depends smoothly on $\lambda$; what one must give up is the
insistence on obtaining it as a power series. The two ingredients that replace the series
are

\begin{itemize}
\item the \emph{Henstock--Kurzweil (gauge) integral}, which makes the formal symbol
$\dphi$ into a genuine limit of Riemann sums over cylindrical partitions, and
\item \emph{differentiation under the integral sign in the coupling constant}, which
delivers all the derivatives $\partial_\lambda^k$ directly, as convergent integrals,
without ever summing a series.
\end{itemize}

Throughout the paper we adopt the convention, used consistently in all language versions
of this preprint, of writing every functional integral first in the formal physicists'
notation and then in its rigorous gauge form:
\[
\int \dphi\; F[\varphi]
\qquad\text{and}\qquad
\int \dphi\; F[\varphi]
=
\int \dhk\; F[\varphi].
\]
The second equality is not a theorem about two pre-existing objects; it is the
\emph{definition} of the left-hand side, together with the assertion (Section~\ref{sec:prelim})
that on absolutely convergent integrands it reproduces the Lebesgue value.

\subsection{The classical route, step by step}
\label{sec:classical}

Let us recall, in the pedagogical detail promised in the abstract, how a physicist
usually gives meaning to a real-time path integral. Consider the quantum-mechanical
amplitude
\begin{equation}
\label{eq:realtime}
K(x_f,t;x_i,0)=\int \dphi\;
\exp\!\Big(\tfrac{i}{\hbar}\,S[\varphi]\Big),
\qquad
S[\varphi]=\int_0^t\Big(\tfrac{m}{2}\dot\varphi^2-V(\varphi)\Big)ds .
\end{equation}

\paragraph{Step 1: the formal measure.}
The symbol $\dphi$ in \eqref{eq:realtime} denotes ``the uniform measure on the space of
paths''. Cameron's theorem states that no such countably additive complex measure of
finite total variation exists on $C([0,t])$; the object is genuinely formal.

\paragraph{Step 2: Wick rotation.}
One substitutes $t\mapsto -i\tau$, so that $e^{iS/\hbar}$ becomes $e^{-S_E/\hbar}$ with
$S_E$ the Euclidean action. The oscillatory integrand turns into an exponentially damped
one, and the formal measure becomes the Wiener measure $\mu_W$, a bona fide probability
measure on continuous paths.

\paragraph{Step 3: the Cameron--Martin obstruction.}
Wiener measure is supported on nowhere-differentiable paths; the kinetic term
$\int\dot\varphi^2$ is almost surely infinite. Only translations along the
Cameron--Martin subspace $H^1$ leave $\mu_W$ quasi-invariant, and $\mu_W(H^1)=0$. Hence
the Gaussian ``density'' $e^{-\frac12\int\dot\varphi^2}$ never exists as a function; it
exists only in combination with the (also non-existent) flat measure.

\paragraph{Step 4: determinants and $\zeta$-regularisation.}
In the free case the Euclidean integral formally equals
$\big(\det(-\partial_\tau^2+m^2)\big)^{-1/2}$. The operator has eigenvalues growing
without bound, so the product $\prod_k\mu_k$ diverges. One defines
\[
\zeta_A(s)=\sum_k \mu_k^{-s},
\qquad
{\det}_\zeta A=\exp\!\big(-\zeta_A'(0)\big),
\]
by analytic continuation of $\zeta_A$ from the half-plane where the sum converges.

\paragraph{Step 5: analytic continuation back.}
Finally, one continues the Euclidean answer back to real time $\tau\mapsto it$, hoping
that the continuation is unique and that the physical amplitude is recovered.

Figure~\ref{fig:classical} summarises this route. Each arrow is a nontrivial analytic
step; two of them (Steps 2 and 5) are not available at all for a general
non-Gaussian action with a complex or time-dependent coupling.

\begin{figure}[htbp]
\centering
\resizebox{\textwidth}{!}{%
\begin{tikzpicture}[
  node distance=8mm and 12mm,
  box/.style={draw,rounded corners=2pt,align=center,inner sep=5pt,font=\small,
              minimum height=11mm,text width=27mm,fill=blue!4},
  arr/.style={-{Latex[length=2.4mm]},thick}
]
\node[box] (a) {Path integral\\ $\int\dphi\,e^{iS/\hbar}$};
\node[box,right=of a] (b) {Wick rotation\\ $t\mapsto -i\tau$};
\node[box,right=of b] (c) {Wiener measure\\ $\mu_W$};
\node[box,right=of c] (d) {$\zeta$-regularised\\ determinant};
\node[box,right=of d] (e) {Answer\\ (Euclidean)};
\node[box,below=14mm of e] (f) {Answer\\ (real time)};
\draw[arr] (a) -- (b);
\draw[arr] (b) -- (c);
\draw[arr] (c) -- (d);
\draw[arr] (d) -- (e);
\draw[arr] (e) -- node[right,font=\scriptsize,align=left]{analytic\\ continuation\\ $\tau\mapsto it$} (f);
\node[font=\scriptsize,align=center,below=2mm of c] (g)
  {Cameron--Martin obstruction:\\ $\mu_W(H^1)=0$};
\draw[arr,dashed,gray] (g) -- (c);
\end{tikzpicture}%
}
\caption{The classical route to a functional integral. Every arrow is an analytic
detour; the real-time object is reached only at the very end, by analytic continuation.}
\label{fig:classical}
\end{figure}

\subsection{The gauge route}

The Henstock--Kurzweil approach, developed for path integrals by Muldowney
\cite{Muldowney2012} and placed in the Kuelbs--Steadman framework by Gill and Zachary
\cite{GillZachary2008} and Esposito and Gill \cite{EspositoGill2025}, replaces this chain
of detours by a single construction performed \emph{directly in real time}.

One partitions the space of fields into \emph{cylindrical intervals}, i.e.\ sets of the
form $\{\varphi:\;P\varphi\in J\}$ with $P$ a finite-rank projection and $J\subset\R^{k}$
a bounded interval. A \emph{gauge} is any strictly positive function
$\delta(\cdot)>0$ on the space of tags, and a tagged partition is \emph{$\delta$-fine} if
every cell is small in the sense prescribed by $\delta$ at its own tag. The integral is
the limit of Riemann sums over $\delta$-fine partitions as the gauge is refined.
Figure~\ref{fig:hk} displays the resulting pipeline.

\begin{figure}[htbp]
\centering
\resizebox{\textwidth}{!}{%
\begin{tikzpicture}[
  node distance=8mm and 12mm,
  box/.style={draw,rounded corners=2pt,align=center,inner sep=5pt,font=\small,
              minimum height=11mm,text width=27mm,fill=green!5},
  arr/.style={-{Latex[length=2.4mm]},thick}
]
\node[box] (a) {Path integral\\ $\int\dphi\,F[\varphi]$};
\node[box,right=of a] (b) {Cylindrical\\ partitions};
\node[box,right=of b] (c) {Gauge $\delta(\cdot)>0$\\ $\delta$-fine tags};
\node[box,right=of c] (d) {Finite-dimensional\\ reduction};
\node[box,right=of d] (e) {Answer\\ (real time)};
\draw[arr] (a) -- (b);
\draw[arr] (b) -- (c);
\draw[arr] (c) -- (d);
\draw[arr] (d) -- (e);
\node[font=\scriptsize,align=center,below=3mm of c] (g)
  {no Wick rotation,\\ no measure needed};
\draw[arr,dashed,gray] (g) -- (c);
\end{tikzpicture}%
}
\caption{The Henstock--Kurzweil route: the formal symbol $\dphi$ is given a meaning by
gauge-fine cylindrical partitions, and the answer is obtained without leaving real time.}
\label{fig:hk}
\end{figure}

\subsection{What is proved here, and what is machine-checked}

The mathematical content of the paper is summarised by the following four statements,
each of which is formalised in Lean~4 (see Section~\ref{sec:lean}).

\begin{enumerate}
\item[(A)] \emph{Finiteness.} For $\omega>0$, $j\in\R$, $\lambda\ge0$ the one-mode
integral $I(\omega,j,\lambda)$ converges absolutely; hence its gauge integral coincides
with the Lebesgue integral.
\item[(B)] \emph{Smoothness in the coupling.} $I$ is infinitely differentiable in
$\lambda$ on $(0,\infty)$, and possesses a one-sided derivative at $\lambda=0$; every
derivative is again an integral of $\varphi^{4k}$ against the same weight.
\item[(C)] \emph{Factorisation and bounds.} The $M$-mode influence functional
$\mathcal{F}_\lambda(j)$ factorises into one-mode integrals, is strictly positive and is
bounded above by its Gaussian value.
\item[(D)] \emph{Chernoff splitting.} For bounded generators the product formula
$(e^{tA_0/N}e^{tB_\lambda/N})^{N}\psi\to e^{t(A_0+B_\lambda)}\psi$ holds strongly, so that
non-commutativity of the free and the non-Gaussian generator is under complete control.
\end{enumerate}

The present work continues the programme initiated in \cite{Berdinsky2026HK} and
\cite{BerdinskyUshakov2026}, where the free (Gaussian) HK path integral on
Kuelbs--Steadman spaces and the Chernoff approximation for bounded generators and for the
harmonic oscillator were constructed and verified.

\section{Mathematical preliminaries}
\label{sec:prelim}

\subsection{Plain language first}

Before any formal definition, here is the idea in words. The Riemann integral of a
function on an interval is a limit of sums $\sum_j f(t_j)|I_j|$ over partitions whose
cells $I_j$ are \emph{uniformly} small: one fixes a mesh $\eta$ and requires
$|I_j|<\eta$. The gauge (Henstock--Kurzweil) integral makes one modest change: the
maximal allowed size of a cell is allowed to \emph{depend on the point} at which the
function is sampled. Where the integrand oscillates violently, one insists on very small
cells; where it is tame, larger cells are permitted. This single modification is enough
to integrate functions that are not Lebesgue integrable at all, in particular the
oscillatory integrands of real-time quantum mechanics.

In infinitely many dimensions there are no ``small boxes'' of finite volume, so one uses
\emph{cylindrical} boxes: conditions imposed on finitely many coordinates of the field,
with the remaining coordinates unconstrained. A partition of the field space is then a
finite family of such cylinders, and the gauge prescribes both how small each cylinder
must be and how many coordinates must be controlled.

\subsection{Gauges, cylindrical intervals and $\delta$-fine partitions}

Let $\mathcal{H}$ be a real separable Hilbert space of field configurations (in the
finite-dimensional situation of this paper, $\mathcal{H}=\R^{M}$).

\begin{definition}[Cylindrical interval]
A \emph{cylindrical interval} is a set
\[
I=\{\varphi\in\mathcal{H}:\;P\varphi\in J\},
\]
where $P:\mathcal{H}\to\R^{k}$ is a finite-rank orthogonal projection and
$J=\prod_{r=1}^{k}(a_r,b_r]\subset\R^{k}$ is a bounded interval. The number $k$ is the
\emph{rank} of $I$ and $J$ its \emph{base}.
\end{definition}

\begin{definition}[Gauge and $\delta$-fine partition]
A \emph{gauge} is a function $\delta:\mathcal{H}\to(0,\infty)$. A \emph{tagged partition}
is a finite collection $\{(\varphi_j,I_j)\}_{j\le N}$ of pairwise non-overlapping
cylindrical intervals together with tags $\varphi_j\in I_j$. It is \emph{$\delta$-fine}
if for each $j$ the base of $I_j$ is contained in the ball of radius
$\delta(\varphi_j)$ around $P_j\varphi_j$ in $\R^{k_j}$.
\end{definition}

\begin{definition}[HK integral]
A functional $F$ is HK-integrable with value $Z$ if for every $\varepsilon>0$ there is a
gauge $\delta$ such that
\[
\Big|\sum_{j} F(\varphi_j)\,\mu(I_j)-Z\Big|<\varepsilon
\]
for every $\delta$-fine tagged partition, where $\mu(I_j)$ is the elementary volume of
the base. We then write $Z=\int\dhk\,F[\varphi]$.
\end{definition}

Cousin's lemma (in the cylindrical form used by Muldowney \cite{Muldowney2012}) guarantees
that $\delta$-fine partitions exist for every gauge, so the definition is not vacuous;
the classical Henstock--Saks--Henstock machinery then yields uniqueness of the value and
additivity.

\begin{remark}[Compatibility with Lebesgue]
\label{rem:compat}
If $F$ is Lebesgue integrable (in particular if $F\ge0$ and $\int|F|<\infty$), then $F$ is
HK-integrable and both integrals agree. All integrands occurring in
Sections~\ref{sec:onemode}--\ref{sec:influence} are strictly positive and dominated by a
Gaussian, hence absolutely convergent; the gauge construction is what makes the formal
symbol $\dphi$ meaningful, while the numerical value is the Lebesgue one. This is exactly
the reason why the Lean formalisation may be phrased with Mathlib's Bochner integral
without any loss of rigour.
\end{remark}

\subsection{The fundamental Gaussian equality}

\begin{theorem}[Gaussian gauge integral]
\label{thm:gauss}
Let $A$ be a real symmetric positive definite $n\times n$ matrix. Then
\[
\int_{\R^{n}} e^{-\frac12\varphi^{T}A\varphi}\,\dphi
=
\int_{\R^{n}} e^{-\frac12\varphi^{T}A\varphi}\,\dhk
=(2\pi)^{n/2}\,(\det A)^{-1/2}.
\]
\end{theorem}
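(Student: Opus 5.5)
The plan has two parts: first evaluate the Lebesgue integral, then transfer its value to the gauge integral using Remark~\ref{rem:compat}. The first equality in the statement is the definitional convention $\int\dphi\,F=\int\dhk\,F$ from the introduction, so it needs no proof. For the transfer to apply, I will check that the integrand $F(\varphi)=e^{-\frac12\varphi^{T}A\varphi}$ is nonnegative and Lebesgue integrable on $\R^n$. It is continuous and strictly positive. Since $A$ is positive definite, $\varphi^{T}A\varphi\ge\lambda_{\min}|\varphi|^2$ with $\lambda_{\min}>0$, so
\[
0<F(\varphi)\le e^{-\frac12\lambda_{\min}|\varphi|^2},
\]
and the right-hand side is integrable by Fubini applied to one-dimensional Gaussians. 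Remark~\ref{rem:compat} then says that the HK integral exists and equals the Lebesgue integral. So everything reduces to computing $\int_{\R^n}F\,d^n\varphi$.

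\textbf{Computing the Lebesgue integral.} I will diagonalise. By the spectral theorem, $A=O^{T}DO$ with $O$ orthogonal and $D=\mathrm{diag}(\lambda_1,\dots,\lambda_n)$, all $\lambda_i>0$. The linear change of variables $\psi=O\varphi$ has $|\det O|=1$, so
\[
\int_{\R^n} e^{-\frac12\varphi^{T}A\varphi}\,d^n\varphi=\int_{\R^n} e^{-\frac12\sum_i\lambda_i\psi_i^2}\,d^n\psi .
\]
By Tonelli, the right-hand side factorises as $\prod_i\int_\R e^{-\frac12\lambda_i t^2}\,dt$. The one-dimensional identity $\int_\R e^{-\frac12\lambda t^2}\,dt=\sqrt{2\pi/\lambda}$ follows from the standard polar-coordinates computation of $\int e^{-t^2}dt=\sqrt\pi$ together with scaling. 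The product is then $(2\pi)^{n/2}\big(\prod_i\lambda_i\big)^{-1/2}=(2\pi)^{n/2}(\det A)^{-1/2}$, as claimed.

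\textbf{Main obstacle.} The computation itself is routine. The step that needs care is the transfer from Lebesgue to the cylindrical gauge integral, because the domain is all of $\R^n$ and is unbounded. If one wants more than a citation of Remark~\ref{rem:compat}, I would give the direct argument. Exhaust $\R^n$ by boxes $[-R,R]^n$, and choose $R$ so that the Gaussian tail beyond the box is below $\varepsilon/3$. On the bounded box, the continuous integrand is Riemann integrable, so a constant gauge controls the Riemann sums to within $\varepsilon/3$. For unbounded cells, or cells far out, choose the gauge at a tag $\varphi$ small enough that $F(\varphi)\mu(I)$ is at most $\varepsilon 2^{-m}$, where $m$ indexes dyadic shells in $|\varphi|$, exactly as in the Saks--Henstock treatment of improper integrals. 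Monotone convergence then identifies the limit. In a Lean formalisation this step is exactly what Remark~\ref{rem:compat} licenses: one works with the Bochner integral, diagonalises with Mathlib's spectral theorem for Hermitian matrices, and uses the library's one-dimensional Gaussian integral.
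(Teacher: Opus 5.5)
Your proof is correct and is essentially the paper's own sketch: diagonalise $A=O^{T}DO$, use that the orthogonal change of variables preserves volume, factor into one-dimensional Gaussians by Fubini--Tonelli, multiply the factors $\sqrt{2\pi/\lambda_i}$ into $(2\pi)^{n/2}(\det A)^{-1/2}$, and pass to the gauge value via Remark~\ref{rem:compat}. Your explicit bound $F\le e^{-\frac12\lambda_{\min}|\varphi|^2}$ simply makes visible the domination hypothesis that the paper leaves implicit in that remark.

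The optional direct gauge argument is not needed, and as sketched it does not suffice. A per-cell bound $F(\varphi)\mu(I)\le\varepsilon 2^{-m}$ does not control the sum over the arbitrarily many cells tagged in shell $m$. You would instead have to bound the whole contribution of each shell, for example by Saks--Henstock on that shell with tolerance $\varepsilon 2^{-m}$, or simply by $\sup_{\text{shell}}F$ times the shell's volume.
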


\begin{proof}[Sketch]
Diagonalise $A=O^{T}\mathrm{diag}(\omega_1,\dots,\omega_n)O$ with $\omega_i>0$; the
change of variables $\psi=O\varphi$ is orthogonal, hence volume preserving, and turns the
quadratic form into $\sum_i\omega_i\psi_i^2/2$. The integrand becomes a product of
one-dimensional Gaussians, Fubini applies, and each factor equals
$\sqrt{2\pi/\omega_i}$. Multiplying, $\prod_i\sqrt{2\pi/\omega_i}
=(2\pi)^{n/2}(\prod_i\omega_i)^{-1/2}=(2\pi)^{n/2}(\det A)^{-1/2}$. The passage from the
Lebesgue value to the gauge value is Remark~\ref{rem:compat}.
\end{proof}

The Lean counterpart of Theorem~\ref{thm:gauss} is
\lean{HKFreeField.hk\_partition\_function}, proved in the earlier file
\lean{HkFreeField.lean} of the programme \cite{Berdinsky2026HK}; the corresponding
two-point function is \lean{HKFreeField.hk\_two\_point}.

\begin{remark}[On $\zeta$-regularisation]
In the finite-dimensional setting $\det A$ is a finite product of eigenvalues and no
regularisation is needed. In the continuum limit the same expression must be replaced by
${\det}_\zeta$, and the fact that the gauge construction reproduces the
$\zeta$-regularised value is precisely the content of the Kuelbs--Steadman analysis of
\cite{GillZachary2008,EspositoGill2025}. This paper stays in the finite-mode regime,
where all statements are elementary and can be machine-checked.
\end{remark}

\section{Non-Gaussian one-mode integrals}
\label{sec:onemode}

\subsection{The object of study}

Everything in this paper reduces to the single one-dimensional integral
\begin{equation}
\label{eq:onemode}
I(\omega,j,\lambda)
=
\int_{\R}\exp\!\Big(-\frac{\omega}{2}\varphi^{2}+j\varphi-\lambda\varphi^{4}\Big)\dphi
=
\int_{\R}\exp\!\Big(-\frac{\omega}{2}\varphi^{2}+j\varphi-\lambda\varphi^{4}\Big)\dhk .
\end{equation}
Here $\omega>0$ is the (squared) frequency of the mode, $j\in\R$ an external source and
$\lambda\ge0$ the quartic coupling. Figure~\ref{fig:onemode} shows the integrand for
$\lambda=0$ and $\lambda>0$, together with the way a gauge treats the quartic tail.

\begin{figure}[htbp]
\centering
\begin{tikzpicture}
\begin{axis}[
  width=13cm,height=6.2cm,
  xlabel={$\varphi$},ylabel={weight},
  domain=-3:3,samples=201,
  legend style={font=\small,at={(0.02,0.98)},anchor=north west},
  axis lines=left,
  ymin=0,ymax=1.15,
  xtick={-3,-2,-1,0,1,2,3}
]
\addplot[thick,blue] {exp(-0.5*x^2)};
\addlegendentry{$e^{-\omega\varphi^{2}/2}$, $\lambda=0$}
\addplot[thick,red,dashed] {exp(-0.5*x^2-0.15*x^4)};
\addlegendentry{$e^{-\omega\varphi^{2}/2-\lambda\varphi^{4}}$, $\lambda=0.15$}
% gauge cells: wide in the middle, narrow in the tails
\draw[gray,thin] (axis cs:-3,0) -- (axis cs:-3,0.06);
\draw[gray,thin] (axis cs:-2.4,0) -- (axis cs:-2.4,0.06);
\draw[gray,thin] (axis cs:-1.9,0) -- (axis cs:-1.9,0.06);
\draw[gray,thin] (axis cs:-1.5,0) -- (axis cs:-1.5,0.06);
\draw[gray,thin] (axis cs:-0.8,0) -- (axis cs:-0.8,0.06);
\draw[gray,thin] (axis cs:0,0) -- (axis cs:0,0.06);
\draw[gray,thin] (axis cs:0.8,0) -- (axis cs:0.8,0.06);
\draw[gray,thin] (axis cs:1.5,0) -- (axis cs:1.5,0.06);
\draw[gray,thin] (axis cs:1.9,0) -- (axis cs:1.9,0.06);
\draw[gray,thin] (axis cs:2.4,0) -- (axis cs:2.4,0.06);
\draw[gray,thin] (axis cs:3,0) -- (axis cs:3,0.06);
\node[font=\scriptsize,gray] at (axis cs:-2.35,0.22) {fine cells};
\node[font=\scriptsize,gray] at (axis cs:0.0,0.13) {coarse cells};
\node[font=\scriptsize,gray] at (axis cs:2.35,0.22) {fine cells};
\end{axis}
\end{tikzpicture}
\caption{The one-dimensional non-Gaussian integrand. The quartic term $-\lambda\varphi^4$
suppresses the tails (dashed curve). A gauge $\delta(\cdot)$ assigns \emph{small} cells
where the quartic term dominates the behaviour of the integrand and \emph{large} cells in
the central Gaussian region; this pointwise adaptivity is what isolates the quartic
contribution.}
\label{fig:onemode}
\end{figure}

\subsection{Finiteness}

\begin{theorem}[Existence and finiteness]
\label{thm:finite}
Let $\omega>0$, $j\in\R$, $\lambda\ge0$ and $k\in\N$. Then the function
$\varphi\mapsto\varphi^{k}e^{-\frac{\omega}{2}\varphi^{2}+j\varphi-\lambda\varphi^{4}}$ is
absolutely integrable. In particular $I(\omega,j,\lambda)$ is finite, and so are all the
moments
\begin{equation}
\label{eq:moments}
M_m(\omega,j,\lambda)=\int_\R \varphi^{m}\,
e^{-\frac{\omega}{2}\varphi^{2}+j\varphi-\lambda\varphi^{4}}\,d\varphi .
\end{equation}
\end{theorem}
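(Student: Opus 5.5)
The plan is to dominate the integrand by a Gaussian with a shifted linear term and then invoke the Gaussian moments. Since $\lambda\ge0$, we have $e^{-\lambda\varphi^4}\le1$ pointwise, so
\[
\big|\varphi^{k}e^{-\frac{\omega}{2}\varphi^{2}+j\varphi-\lambda\varphi^{4}}\big|
\le |\varphi|^{k}e^{-\frac{\omega}{2}\varphi^{2}+j\varphi}.
\]
This removes $\lambda$ from the problem entirely. All remaining work concerns the $\lambda=0$ weight with a source.

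Next, I would absorb the source by completing the square: $-\frac{\omega}{2}\varphi^2+j\varphi=-\frac{\omega}{2}(\varphi-j/\omega)^2+\frac{j^2}{2\omega}$. This gives $|\varphi|^k e^{-\frac{\omega}{2}\varphi^2+j\varphi}=e^{j^2/(2\omega)}|\varphi|^k e^{-\frac{\omega}{2}(\varphi-j/\omega)^2}$. There are two ways to finish.

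\emph{Translation route.} Substitute $\psi=\varphi-j/\omega$, which is a translation and hence preserves Lebesgue measure. Then bound $|\psi+j/\omega|^k\le 2^{k-1}(|\psi|^k+|j/\omega|^k)$ for $k\ge1$. This reduces everything to the finiteness of the absolute Gaussian moments $\int_\R|\psi|^m e^{-\frac{\omega}{2}\psi^2}d\psi$ for $m\le k$.

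\emph{Direct route.} Alternatively, avoid the shift and bound $e^{j\varphi}\le e^{\frac{\omega}{4}\varphi^2+j^2/\omega}$, using $|j\varphi|\le \frac{\omega}{4}\varphi^2+\frac{j^2}{\omega}$. This dominates the integrand by $e^{j^2/\omega}|\varphi|^k e^{-\frac{\omega}{4}\varphi^2}$, a single centred Gaussian moment with $\omega/2$ in place of $\omega$.

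Either way, the key input is that $\int_\R|\psi|^m e^{-a\psi^2}d\psi<\infty$ for $a>0$. I would get this in one of two ways. One is the standard Gamma-function formula $\int_\R|\psi|^m e^{-a\psi^2}d\psi=\Gamma(\frac{m+1}{2})a^{-(m+1)/2}$. The other is an elementary argument: $|\psi|^m e^{-a\psi^2/2}$ is bounded, so the integrand is at most a constant times $e^{-a\psi^2/2}$. The latter is integrable by the case $n=1$ of Theorem~\ref{thm:gauss}.

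Measurability is clear, since the integrand is continuous. Domination by an integrable function therefore gives absolute integrability. The case $k=0$ gives the finiteness of $I(\omega,j,\lambda)$, and general $k=m$ gives the finiteness of the moments $M_m$ in \eqref{eq:moments}. By Remark~\ref{rem:compat}, the gauge integral then exists and agrees with the Lebesgue value.

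I do not expect a genuine obstacle. The only step needing care is choosing the dominating bound so that its integrability is already available, which in a Lean setting means Mathlib's Gaussian and power-times-Gaussian integrability lemmas. I would try the direct route first, since it avoids the change of variables and keeps the bound in the form $|\varphi|^k e^{-b\varphi^2}$ with $b=\omega/4>0$. For that form, integrability follows immediately from the known results on $\varphi^m e^{-b\varphi^2}$.
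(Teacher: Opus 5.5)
Your proposal is correct, and your preferred direct route is exactly the paper's argument. The paper also drops the factor via $e^{-\lambda\varphi^4}\le 1$ and obtains $-\frac{\omega}{2}\varphi^2+j\varphi\le \frac{j^2}{\omega}-\frac{\omega}{4}\varphi^2$ by completing the square, so it dominates the integrand by $e^{j^2/\omega}|\varphi|^k e^{-\frac{\omega}{4}\varphi^2}$; the only cosmetic difference is the last step, where the paper uses $|\varphi|^k\le 1+\varphi^{2k}$ instead of your boundedness or Gamma-function argument.
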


\begin{proof}
Since $\lambda\ge 0$ we have $-\lambda\varphi^4\le 0$, so the weight is bounded by
$e^{-\frac{\omega}{2}\varphi^{2}+j\varphi}$. Completing the square,
\[
-\frac{\omega}{2}\varphi^{2}+j\varphi
=\frac{j^{2}}{\omega}-\frac{\omega}{4}\varphi^{2}
-\frac{\omega}{4}\Big(\varphi-\frac{2j}{\omega}\Big)^{2}
\le \frac{j^{2}}{\omega}-\frac{\omega}{4}\varphi^{2}.
\]
Hence $|\varphi^{k}|\,e^{\cdots}\le e^{j^2/\omega}\,|\varphi|^{k}e^{-\frac{\omega}{4}\varphi^{2}}$,
and $|\varphi|^{k}\le 1+\varphi^{2k}$, both terms being integrable against a Gaussian.
\end{proof}

In Lean this is \lean{integrable\_pow\_mul\_oneModeWeight} and its specialisation
\lean{nonGaussianOneModeFinite}.

\begin{proposition}[Positivity and monotonicity]
\label{prop:posmono}
For $\omega>0$ and $0\le\lambda_1\le\lambda_2$,
\[
0<I(\omega,j,\lambda_2)\le I(\omega,j,\lambda_1),
\qquad
I(\omega,j,0)=\sqrt{\frac{2\pi}{\omega}}\;e^{\,j^{2}/(2\omega)} .
\]
\end{proposition}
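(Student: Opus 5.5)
The plan is to argue pointwise on the integrand and then integrate, using Theorem~\ref{thm:finite} for integrability throughout. Write
\[
w_\lambda(\varphi)=\exp\!\Big(-\frac{\omega}{2}\varphi^{2}+j\varphi-\lambda\varphi^{4}\Big),
\]
so that $I(\omega,j,\lambda)=\int_\R w_\lambda(\varphi)\,d\varphi$, with the gauge and Lebesgue values agreeing by Remark~\ref{rem:compat}.

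\emph{Monotonicity.} For $0\le\lambda_1\le\lambda_2$ we have $\lambda_1\varphi^4\le\lambda_2\varphi^4$ for every $\varphi$, because $\varphi^4\ge0$. Since $\exp$ is monotone, this gives $w_{\lambda_2}\le w_{\lambda_1}$ pointwise. Both functions are integrable by Theorem~\ref{thm:finite} with $k=0$, so monotonicity of the integral yields $I(\omega,j,\lambda_2)\le I(\omega,j,\lambda_1)$.

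\emph{Strict positivity.} The weight $w_{\lambda_2}$ is continuous and strictly positive everywhere. One route is that an integrable function that is strictly positive everywhere has positive integral: its support is all of $\R$, which has positive Lebesgue measure. In Lean this should be \lean{integral\_pos\_iff\_support\_of\_nonneg}. An elementary alternative is to bound $w_{\lambda_2}\ge c>0$ on $[0,1]$, where $c$ is the minimum of the continuous positive function on that compact interval. Then $I\ge c>0$, using nonnegativity of the integrand off $[0,1]$.

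\emph{Gaussian value at $\lambda=0$.} Complete the square:
\[
-\frac{\omega}{2}\varphi^2+j\varphi=-\frac{\omega}{2}\Big(\varphi-\frac{j}{\omega}\Big)^2+\frac{j^2}{2\omega}.
\]
Translating by $j/\omega$ leaves Lebesgue measure invariant. This reduces the claim to $e^{j^2/(2\omega)}\int_\R e^{-\omega\psi^2/2}\,d\psi$. Theorem~\ref{thm:gauss} with $n=1$ and $A=\omega$ evaluates the remaining integral as $\sqrt{2\pi/\omega}$, giving $I(\omega,j,0)=\sqrt{2\pi/\omega}\,e^{j^2/(2\omega)}$. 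Mathlib has \lean{integral\_gaussian} and translation invariance (\lean{integral\_sub\_right\_eq\_self}); alternatively there is a direct lemma for the Gaussian with a linear term.

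\emph{Expected obstacle.} None of these steps is deep. The only fiddly part is the bookkeeping in the completed-square identity: rewriting the exponent so that it syntactically matches the shifted Gaussian, and handling the constant factor $e^{j^2/(2\omega)}$ pulled out of the integral. Positivity needs only the support argument above.
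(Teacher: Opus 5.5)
Your proposal is correct and follows essentially the same route as the paper's proof. Both argue positivity from the strictly positive integrable integrand, monotonicity from the pointwise decrease of $e^{-\lambda\varphi^4}$ in $\lambda$, and the $\lambda=0$ value by completing the square and translating to the Gaussian $\int e^{-\frac{\omega}{2}\psi^2}d\psi=\sqrt{2\pi/\omega}$; you simply spell out the measure-theoretic details (the support or compact-interval lower bound, and the Mathlib lemmas) more explicitly than the paper does.
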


\begin{proof}
Positivity holds because the integrand is everywhere strictly positive and integrable;
monotonicity because $\lambda\mapsto e^{-\lambda\varphi^4}$ is pointwise non-increasing.
The closed form at $\lambda=0$ follows by completing the square and translating the
variable, $\int e^{-\frac{\omega}{2}(\varphi-j/\omega)^2}d\varphi=\sqrt{2\pi/\omega}$.
\end{proof}

Lean: \lean{nonGaussianOneModePos}, \lean{nonGaussianOneModeAntitone},
\lean{nonGaussianOneModeFreeValue}.

\subsection{Smoothness in the coupling, without Taylor series}

We now come to the technical heart of the paper. We emphasise once more: the map
$\lambda\mapsto I(\omega,j,\lambda)$ is \emph{not} analytic at $\lambda=0$ --- its formal
Taylor series diverges for every $\lambda\neq0$ --- and yet it is $C^\infty$ on $[0,\infty)$
in the sense of one-sided derivatives at the origin. Smoothness and analyticity are
different notions, and only the first one is needed to do physics with the integral.

\begin{theorem}[Differentiation under the integral sign]
\label{thm:diff}
Let $\omega>0$, $j\in\R$.
\begin{enumerate}
\item For every $\lambda>0$ and every $m\in\N$,
\[
\frac{d}{d\lambda}\,M_m(\omega,j,\lambda)=-\,M_{m+4}(\omega,j,\lambda).
\]
\item Consequently $\lambda\mapsto M_m(\omega,j,\lambda)$ is of class $C^{\infty}$ on
$(0,\infty)$ and
\[
\frac{d^{k}}{d\lambda^{k}}\,I(\omega,j,\lambda)=(-1)^{k}M_{4k}(\omega,j,\lambda).
\]
\item At the boundary point $\lambda=0$ the one-sided derivative exists and
\begin{equation}
\label{eq:derzero}
\partial_\lambda I(\omega,j,0)
=-\int_\R \varphi^{4}\,e^{-\frac{\omega}{2}\varphi^{2}+j\varphi}\,d\varphi .
\end{equation}
\end{enumerate}
\end{theorem}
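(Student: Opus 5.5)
The plan is the standard dominated-differentiation argument (Leibniz rule), with Theorem~\ref{thm:finite} supplying the dominating functions. Write $w_\lambda(\varphi)=e^{-\frac{\omega}{2}\varphi^{2}+j\varphi-\lambda\varphi^{4}}$, so $M_m(\lambda)=\int\varphi^m w_\lambda\,d\varphi$. Pointwise,
\[
\partial_\lambda\big(\varphi^m w_\lambda(\varphi)\big)=-\varphi^{m+4}w_\lambda(\varphi).
\]
We then need an integrable bound on this derivative that is uniform in $\lambda$ near a given point. For any $\lambda$ in a neighbourhood $U$ of $\lambda_0$ with $\lambda\ge 0$, we have $w_\lambda\le w_0$. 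Hence
\[
|\varphi^{m+4}w_\lambda(\varphi)|\le|\varphi|^{m+4}w_0(\varphi)=:g(\varphi),
\]
and $g$ is integrable by Theorem~\ref{thm:finite} with $\lambda=0$ and $k=m+4$ (its proof bounds $|\varphi|^k w_0$ by $e^{j^2/\omega}|\varphi|^k e^{-\omega\varphi^2/4}$). The bound $g$ does not depend on $\lambda$ at all, which is the key simplification: one dominating function serves every $\lambda\ge0$.

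\textbf{Part 1.} Fix $\lambda_0>0$ and take the neighbourhood $U=(\lambda_0/2,\,2\lambda_0)\subset(0,\infty)$. On $U$ the integrand is integrable for each $\lambda$ by Theorem~\ref{thm:finite}. It is differentiable in $\lambda$ for every $\varphi$, and its derivative is dominated by $g$. The mean value theorem then gives
\[
\Big|\frac{\varphi^m w_\lambda-\varphi^m w_{\lambda_0}}{\lambda-\lambda_0}\Big|\le g,
\]
and dominated convergence, or directly Mathlib's \lean{hasDerivAt\_integral\_of\_dominated\_loc\_of\_deriv\_le}, yields $M_m'(\lambda_0)=-M_{m+4}(\lambda_0)$.

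\textbf{Part 2.} Proceed by induction on $k$, applying Part 1 to $M_{4k}$ at each step. This gives $I^{(k)}=(-1)^kM_{4k}$, since $I=M_0$. Each $M_m$ is differentiable on $(0,\infty)$ with derivative $-M_{m+4}$, which is itself differentiable. So every $M_m$ has derivatives of all orders, and hence is $C^\infty$.

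\textbf{Part 3: the one-sided derivative at $\lambda=0$.} This is the step I expect to be the only genuine obstacle. The neighbourhood of $0$ includes negative $\lambda$, where $w_\lambda$ is not even integrable. So I would work within $[0,\infty)$ and prove a right derivative (\lean{HasDerivWithinAt} on \lean{Set.Ici 0}) rather than a two-sided one. For $h>0$ the mean value theorem on $[0,h]$ gives
\[
\Big|\frac{w_h-w_0}{h}\Big|=\varphi^4 w_{\xi}\le\varphi^4 w_0 \quad\text{for some } \xi\in(0,h),
\]
which is integrable. Moreover $(w_h-w_0)/h\to-\varphi^4w_0$ pointwise as $h\to0^+$. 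Dominated convergence along $h\downarrow0$ then gives~\eqref{eq:derzero}.

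Alternatively, and more formally, one can avoid the boundary issue altogether by using an even extension $\tilde w_\lambda=e^{\cdots-|\lambda|\varphi^4}$ or by clamping $\lambda\mapsto\max(\lambda,0)$. Neither choice changes the values on $[0,\infty)$. The same argument also gives right-continuity of all the $M_m$ at $0$, and hence one-sided $C^\infty$ on $[0,\infty)$, as claimed in the surrounding text.
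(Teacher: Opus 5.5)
Your proof is correct. Parts 1 and 2 are the paper's argument (Leibniz rule via dominated convergence on a neighbourhood of $\lambda_0>0$, then induction on the order), with one simplification. You dominate by the $\lambda$-independent function $|\varphi|^{m+4}w_0$, using $w_\lambda\le w_0$. The paper instead uses $(1+\varphi^{2m})\varphi^4w_{\lambda/2}$ on the ball $|\lambda'-\lambda|<\lambda/2$.

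You genuinely diverge at $\lambda=0$. You combine the mean-value bound $|(w_h-w_0)/h|\le\varphi^4w_0$ with dominated convergence along $h\downarrow0$. This way a single dominator serves both the interior and the boundary, and only $M_4(\omega,j,0)<\infty$ is used. The paper estimates the difference quotient directly with the elementary inequality $0\le e^{-u}-1+u\le u^2$ for $u\ge0$, obtaining
\[
\Big|\frac{I(\omega,j,\lambda)-I(\omega,j,0)}{\lambda}+M_4(\omega,j,0)\Big|\le\lambda\,M_8(\omega,j,0).
\]
This costs one more moment but delivers an explicit rate. The paper uses that rate in the subsequent Example to bound the remainder in $I(1,0,\lambda)=\sqrt{2\pi}\,(1-3\lambda)+O(\lambda^2)$ by $105\sqrt{2\pi}\,\lambda^2$. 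Your qualitative argument proves the theorem as stated, but not that quantitative bound.

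Your closing aside is wrong, though it does not affect the proof. Neither $e^{\cdots-|\lambda|\varphi^4}$ nor $e^{\cdots-\max(\lambda,0)\varphi^4}$ is differentiable in $\lambda$ at $0$ when $\varphi\neq0$: their left derivatives there are $+\varphi^4w_0$ and $0$, against the right derivative $-\varphi^4w_0$. So a two-sided Leibniz lemma still cannot be applied at $\lambda=0$, and the one-sided argument you gave remains necessary. If you want to reduce to the two-sided lemma, extend by the tangent line, $w_0(1-\lambda\varphi^4)$ for $\lambda<0$. This extension is $C^1$ in $\lambda$, and its $\lambda$-derivative is dominated near $0$ by $\varphi^4w_0$ (times $|\varphi|^m$ for the moments).
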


\begin{proof}
(1) Fix $\lambda>0$ and work on the ball $|\lambda'-\lambda|<\lambda/2$, on which
$\lambda'>\lambda/2$. The $\lambda'$-derivative of the integrand is
$-\varphi^{m+4}e^{-\frac{\omega}{2}\varphi^{2}+j\varphi-\lambda'\varphi^{4}}$, and its
absolute value is dominated, uniformly in $\lambda'$ in that ball, by
$(1+\varphi^{2m})\varphi^{4}e^{-\frac{\omega}{2}\varphi^{2}+j\varphi-\frac{\lambda}{2}\varphi^{4}}$,
which is integrable by Theorem~\ref{thm:finite}. The dominated-convergence theorem for
parametric integrals applies and gives the formula.

(2) Induction on the order: the derivative of $M_m$ is $-M_{m+4}$, which is itself
differentiable by (1); an induction on $n$ shows $M_m\in C^{n}((0,\infty))$ for all $n$.

(3) The interior argument fails at $\lambda=0$ because a two-sided neighbourhood of $0$
contains negative couplings, for which the integral diverges. Instead we estimate the
difference quotient directly. Write $w_\lambda(\varphi)$ for the integrand, so that
$w_\lambda=e^{-\lambda\varphi^{4}}w_0$. For $u\ge0$ one has the elementary inequalities
\[
0\;\le\;e^{-u}-1+u\;\le\;u^{2},
\]
the first from $e^{-u}\ge 1-u$, the second from $e^{u}\ge 1+u$ and
$(1+u)^{-1}\le 1-u+u^{2}$. With $u=\lambda\varphi^{4}$ this gives, for $\lambda>0$,
\[
\Big|\frac{w_\lambda(\varphi)-w_0(\varphi)}{\lambda}+\varphi^{4}w_0(\varphi)\Big|
=\frac{e^{-u}-1+u}{\lambda}\,w_0(\varphi)
\le \lambda\,\varphi^{8}w_0(\varphi),
\]
whence
\[
\Big|\frac{I(\omega,j,\lambda)-I(\omega,j,0)}{\lambda}+M_4(\omega,j,0)\Big|
\le \lambda\,M_8(\omega,j,0)\xrightarrow[\lambda\downarrow0]{}0 .
\]
This is exactly \eqref{eq:derzero}.
\end{proof}

Lean: \lean{hasDerivAt\_oneModeMoment} (item 1),
\lean{nonGaussianOneModeSmooth} (item 2),
\lean{hasDerivWithinAt\_oneModeIntegral\_zero} and
\lean{nonGaussianOneModeDifferentiable} (item 3),
\lean{nonGaussianOneModeDerivAtZero} for the explicit formula \eqref{eq:derzero}.

\begin{remark}[Why this is not the perturbative expansion]
Theorem~\ref{thm:diff} produces the numbers $\partial_\lambda^{k}I$ but makes \emph{no}
claim that $\sum_k \partial_\lambda^{k}I(0)\lambda^{k}/k!$ converges --- it does not.
What one obtains is a $C^{\infty}$ function with prescribed derivatives, i.e.\ exactly the
information that Borel summation or a rigorous asymptotic analysis needs as input, and
which the naive series cannot supply.
\end{remark}

\begin{example}[First correction to the partition function]
For $j=0$, $\omega=1$ we get $M_4(1,0,0)=3\sqrt{2\pi}$, hence
\[
I(1,0,\lambda)=\sqrt{2\pi}\,\big(1-3\lambda+O(\lambda^{2})\big),\qquad \lambda\downarrow0,
\]
where the remainder is controlled by $\lambda^{2}M_8(1,0,0)=105\sqrt{2\pi}\,\lambda^{2}$
according to the proof above.
\end{example}

\section{The non-Gaussian influence functional for finitely many modes}
\label{sec:influence}

\subsection{Definition and factorisation}

Let \(A = \operatorname{diag}(\omega_1,\dots,\omega_M)\) be a positive definite diagonal
matrix with \(\omega_i>0\). We assume that the quartic potential is given in this same
diagonal basis, namely \(\lambda\sum_i\varphi_i^4\). We therefore define the
\emph{non-Gaussian influence functional}
\begin{equation}
\label{eq:influence}
\mathcal{F}_\lambda(j)
=\int_{\R^{M}}
\exp\!\Big(-\tfrac12\varphi^{T}A\varphi+j\cdot\varphi-\lambda\sum_{i=1}^{M}\varphi_i^{4}\Big)\dphi
=\int_{\R^{M}}
\exp\!\Big(-\tfrac12\varphi^{T}A\varphi+j\cdot\varphi-\lambda\sum_{i}\varphi_i^{4}\Big)\dhk .
\end{equation}

\begin{theorem}[Factorisation]
\label{thm:factor}
In the eigenmode basis,
\[
\mathcal{F}_\lambda(j)=\prod_{i=1}^{M} I(\omega_i,j_i,\lambda).
\]
\end{theorem}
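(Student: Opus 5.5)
The plan is to reduce the $M$-dimensional integral to a product of one-dimensional ones by writing the integrand as a product of single-variable functions and applying Fubini--Tonelli. Since $A=\operatorname{diag}(\omega_1,\dots,\omega_M)$, the exponent separates:
\[
-\tfrac12\varphi^{T}A\varphi+j\cdot\varphi-\lambda\sum_i\varphi_i^4
=\sum_{i=1}^{M}\Big(-\frac{\omega_i}{2}\varphi_i^{2}+j_i\varphi_i-\lambda\varphi_i^{4}\Big).
\]
Therefore the integrand equals $\prod_{i=1}^{M} w_i(\varphi_i)$, where
$w_i(x)=e^{-\frac{\omega_i}{2}x^2+j_ix-\lambda x^4}$ is exactly the one-mode weight in \eqref{eq:onemode}.

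The key steps are as follows.

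\begin{enumerate}
\item By Remark~\ref{rem:compat}, both sides of \eqref{eq:influence} equal the Lebesgue integral, provided the integrand is Lebesgue integrable on $\R^M$. So it suffices to work with Lebesgue integrals.
\item By Theorem~\ref{thm:finite} (case $k=0$), each $w_i$ is nonnegative, measurable and integrable on $\R$, with $\int_\R w_i=I(\omega_i,j_i,\lambda)<\infty$.
\item Since every factor is nonnegative, Tonelli's theorem on $\R^M=\R\times\cdots\times\R$ gives
\[
\int_{\R^M}\prod_i w_i(\varphi_i)\,d^M\varphi=\prod_i\int_\R w_i\,d\varphi_i .
\]
\item The right-hand side is finite, which simultaneously proves integrability of the $M$-mode integrand. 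Step 1 then upgrades the equality to the gauge integral.
\end{enumerate}

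In Lean I would instead use Mathlib's lemma for integrals of products over $\mathrm{Fin}\,M\to\R$ with the product volume measure, which bundles the Fubini argument.

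The phrase ``in the eigenmode basis'' deserves a brief comment. Because $A$ is assumed diagonal in the same basis as the quartic term, no orthogonal change of variables is needed. Had $A$ not been diagonal, we would have rotated as in Theorem~\ref{thm:gauss}, but the quartic term $\sum_i\varphi_i^4$ is not rotation-invariant, so factorisation genuinely requires the hypothesis.

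I expect the only real obstacle to be bookkeeping rather than analysis. Specifically, one has to identify the measure on $\R^M$ as the product measure and justify measurability of the product integrand, which is immediate from continuity. Integrability itself comes for free from Tonelli together with Theorem~\ref{thm:finite}.
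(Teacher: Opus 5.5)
Your proposal is correct and follows the paper's proof: you separate the exponent mode by mode, write the integrand as a product of one-mode weights, and apply Fubini--Tonelli, with Theorem~\ref{thm:finite} giving integrability of each factor. Using Tonelli to obtain integrability of the $M$-mode integrand, and citing Remark~\ref{rem:compat} to pass to the gauge value, only spell out steps the paper leaves implicit.
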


\begin{proof}
The exponent is a sum over modes, hence the integrand is a product of functions of the
individual coordinates; Fubini's theorem (all factors being absolutely integrable by
Theorem~\ref{thm:finite}) gives the product formula.
\end{proof}

\begin{theorem}[Bounds and smoothness]
\label{thm:influencebounds}
For $\omega_i>0$ and $\lambda\ge0$:
\[
0<\mathcal{F}_\lambda(j)\le\mathcal{F}_0(j)
=\prod_{i=1}^{M}\sqrt{\frac{2\pi}{\omega_i}}\;e^{\,j_i^{2}/(2\omega_i)}
=(2\pi)^{M/2}(\det A)^{-1/2}e^{\frac12 j^{T}A^{-1}j},
\]
and $\lambda\mapsto\mathcal{F}_\lambda(j)$ is $C^{\infty}$ on $(0,\infty)$ with
one-sided derivatives at $0$, all of them given by absolutely convergent integrals.
\end{theorem}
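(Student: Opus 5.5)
The plan is to reduce everything to the one-mode results through Theorem~\ref{thm:factor}, which writes $\mathcal{F}_\lambda(j)=\prod_{i=1}^{M}I(\omega_i,j_i,\lambda)$. The bounds then follow factor by factor. Proposition~\ref{prop:posmono} gives $0<I(\omega_i,j_i,\lambda)\le I(\omega_i,j_i,0)$ for each $i$. Since all factors are positive, the product of the lower bounds is strictly positive. The upper bounds multiply monotonically: if $0<a_i\le b_i$, then $\prod a_i\le\prod b_i$, which I will check by a short induction on $M$. This yields $0<\mathcal{F}_\lambda(j)\le\mathcal{F}_0(j)$. The explicit value of $\mathcal{F}_0(j)$ is again the closed form of Proposition~\ref{prop:posmono}, multiplied over $i$.

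Next I identify the product with the matrix expression. For $A=\operatorname{diag}(\omega_1,\dots,\omega_M)$ we have $\prod_i\sqrt{2\pi/\omega_i}=(2\pi)^{M/2}(\prod_i\omega_i)^{-1/2}=(2\pi)^{M/2}(\det A)^{-1/2}$. Moreover $A^{-1}=\operatorname{diag}(\omega_i^{-1})$, so $\sum_i j_i^2/(2\omega_i)=\tfrac12 j^TA^{-1}j$. The exponentials of the individual terms multiply to the exponential of the sum. This agrees with Theorem~\ref{thm:gauss} when $j=0$.

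For smoothness, each map $\lambda\mapsto I(\omega_i,j_i,\lambda)$ is $C^\infty$ on $(0,\infty)$ by Theorem~\ref{thm:diff}(2). A finite product of $C^\infty$ functions is $C^\infty$, so $\mathcal{F}_\lambda(j)$ is $C^\infty$ there. By the Leibniz rule,
\[
\partial_\lambda^{k}\mathcal{F}_\lambda(j)=\sum_{k_1+\dots+k_M=k}\binom{k}{k_1,\dots,k_M}\prod_{i}(-1)^{k_i}M_{4k_i}(\omega_i,j_i,\lambda).
\]
I will then use Fubini once more to read this as a single absolutely convergent integral over $\R^M$:
\[
\partial_\lambda^{k}\mathcal{F}_\lambda(j)=(-1)^k\int_{\R^M}\Big(\sum_i\varphi_i^4\Big)^k e^{-S_\lambda}\,d\varphi .
\]
Absolute convergence comes from Theorem~\ref{thm:finite} applied to each factor.

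The main obstacle is the boundary point $\lambda=0$. Theorem~\ref{thm:diff}(3) gives only the first one-sided derivative of each factor, yet the statement claims one-sided derivatives of every order at $0$. To get them I will redo the difference-quotient argument of Theorem~\ref{thm:diff}(3) for general moments $M_m$. That argument works verbatim with $w_0$ replaced by $\varphi^m w_0$, and it gives
\[
\Big|\frac{M_m(\lambda)-M_m(0)}{\lambda}+M_{m+4}(0)\Big|\le\lambda M_{m+8}(0).
\]
Hence each $M_m$ has right derivative $-M_{m+4}$ at $0$, and it is continuous from the right there. Combined with item (1) on $(0,\infty)$, an induction shows that $I(\omega_i,j_i,\cdot)$ has one-sided derivatives of all orders at $0$, equal to $(-1)^kM_{4k}(\cdot,0)$. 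The product rule for one-sided derivatives within $[0,\infty)$ then transfers this to $\mathcal{F}_\lambda(j)$ with the same Leibniz formula. If only first order at $0$ is intended, Theorem~\ref{thm:diff}(3) together with the product rule suffices directly.
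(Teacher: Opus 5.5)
Your proposal is correct and follows the paper's own argument: factorise via Theorem~\ref{thm:factor}, take positivity, the Gaussian bound, the closed form and smoothness factor by factor from Proposition~\ref{prop:posmono} and Theorem~\ref{thm:diff}, and multiply. Your Leibniz/Fubini rewriting simply makes the ``absolutely convergent integrals'' clause explicit, and your extra work at $\lambda=0$ is genuinely needed if all orders are meant, since Theorem~\ref{thm:diff}(3), the only boundary result the paper cites, gives only the first one-sided derivative; just state that estimate as $\lambda\int_\R|\varphi|^{m+8}w_0\,d\varphi$, because for odd $m$ the signed moment $M_{m+8}(0)$ is negative when $j<0$ --- harmless here, as you only use $m=4k$.
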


\begin{proof}
Combine Theorem~\ref{thm:factor} with Proposition~\ref{prop:posmono} and
Theorem~\ref{thm:diff}; a finite product of positive $C^{\infty}$ functions is
$C^{\infty}$, and the product of the bounds is the bound of the product.
\end{proof}

Lean: \lean{nonGaussianInfluenceFactorises}, \lean{nonGaussianInfluenceBounded},
\lean{nonGaussianInfluenceFreeValue}.

\subsection{Physical reading}

If the modes $\varphi_i$ describe a bath to which a distinguished system is linearly
coupled with strength $j$, then $\mathcal{F}_\lambda(j)$ is precisely the
Feynman--Vernon influence functional of that bath: integrating out the bath replaces it
by the factor $\mathcal{F}_\lambda(j)$ in the reduced dynamics of the system. For
$\lambda=0$ one recovers the familiar Gaussian (harmonic) bath with
$\mathcal{F}_0(j)=e^{\frac12 j^{T}A^{-1}j}$ up to normalisation; the content of
Theorem~\ref{thm:influencebounds} is that an \emph{anharmonic} bath still gives a
well-defined, strictly positive and smoothly $\lambda$-dependent influence functional,
bounded by the harmonic one.

\section{Chernoff splitting for non-Gaussian dynamics}
\label{sec:chernoff}

\subsection{The problem of non-commuting generators}

Let $A_0$ be the generator of the free (Gaussian) evolution and $B_\lambda$ the
non-Gaussian perturbation; on a finite-dimensional state space, or on any complex Hilbert
space $\mathcal{H}$ with bounded operators, both belong to $\mathcal{B}(\mathcal{H})$. The
exact evolution is $e^{t(A_0+B_\lambda)}$, but only $e^{tA_0}$ and $e^{tB_\lambda}$ are
computable in closed form. Since in general $[A_0,B_\lambda]\neq0$,
\[
e^{t(A_0+B_\lambda)}\neq e^{tA_0}e^{tB_\lambda},
\]
and the error is of order $t^{2}\|[A_0,B_\lambda]\|/2$. The Chernoff product formula
turns this defect into a convergent algorithm: one splits the interval into $N$ steps and
alternates the two exponentials.

\begin{definition}
The \emph{one-step operator} of the splitting is
$F_\lambda(t)=e^{tA_0}e^{tB_\lambda}$.
\end{definition}

\begin{theorem}[Strong Chernoff/Lie--Trotter splitting]
\label{thm:chernoff}
Let $A_0,B_\lambda\in\mathcal{B}(\mathcal{H})$. Then for every $\psi\in\mathcal{H}$ and
$t\in\R$,
\[
\Big(e^{tA_0/N}\,e^{tB_\lambda/N}\Big)^{N}\psi
\;\xrightarrow[N\to\infty]{}\;
e^{t(A_0+B_\lambda)}\psi .
\]
\end{theorem}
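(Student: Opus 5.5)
The plan is the standard Lie--Trotter argument for bounded generators, carried out in operator norm; the strong convergence then follows trivially. Set $h=t/N$, $P_N=e^{hA_0}e^{hB_\lambda}$ and $Q_N=e^{h(A_0+B_\lambda)}$, so that $Q_N^N=e^{t(A_0+B_\lambda)}$. We will in fact show $\|P_N^N-Q_N^N\|\to0$. This is stronger than the statement and gives $\|P_N^N\psi-e^{t(A_0+B_\lambda)}\psi\|\le\|P_N^N-Q_N^N\|\,\|\psi\|\to0$ for every $\psi$.

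\emph{Step 1: telescoping.} Write
\[
P_N^N-Q_N^N=\sum_{k=0}^{N-1}P_N^{k}\,(P_N-Q_N)\,Q_N^{N-1-k}.
\]
Put $a=\|A_0\|$ and $b=\|B_\lambda\|$. Then $\|P_N\|\le e^{|h|(a+b)}$ and $\|Q_N\|\le e^{|h|(a+b)}$. Hence each product $P_N^kQ_N^{N-1-k}$ has norm at most $e^{|t|(a+b)}$, uniformly in $N$ and $k$. This gives
\[
\|P_N^N-Q_N^N\|\le N\,e^{|t|(a+b)}\,\|P_N-Q_N\|.
\]

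\emph{Step 2: local error.} This is the only genuinely quantitative step and the expected main obstacle. We need $\|P_N-Q_N\|=O(h^2)$, with a constant depending only on $a$ and $b$ for $|h|\le|t|$. Use the power series of the exponential in the Banach algebra $\mathcal{B}(\mathcal{H})$. Both $e^{hA_0}e^{hB_\lambda}$ and $e^{h(A_0+B_\lambda)}$ equal $1+h(A_0+B_\lambda)+R(h)$. The two remainders are bounded by the tails of absolutely convergent series, giving $\|R(h)\|\le h^2(a+b)^2e^{|h|(a+b)}$ in each case. A concrete route is the elementary bound $\|e^{X}-1-X\|\le\tfrac12\|X\|^2e^{\|X\|}$, applied to $X=hA_0$, $X=hB_\lambda$ and $X=h(A_0+B_\lambda)$, after expanding the product $(1+hA_0+R_1)(1+hB_\lambda+R_2)$. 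The result is $\|P_N-Q_N\|\le C\,t^2/N^2$ with $C=C(a,b,t)$.

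\emph{Step 3: conclusion.} Combining the two steps gives $\|P_N^N-Q_N^N\|\le e^{|t|(a+b)}C\,t^2/N\to0$. This yields the claimed strong convergence, and in fact uniform convergence on bounded sets of $\psi$. The sign of $t$ plays no role, since only $|h|$ enters the estimates.

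Alternatively, one could invoke the abstract Chernoff theorem. The function $F_\lambda(t)$ satisfies $F_\lambda(0)=1$, $\|F_\lambda(t)\|\le e^{|t|(a+b)}$ and $F_\lambda'(0)=A_0+B_\lambda$, and Chernoff's theorem gives $F_\lambda(t/N)^N\to e^{t(A_0+B_\lambda)}$. For bounded operators, however, the direct telescoping argument above is self-contained and better suited to formalisation.
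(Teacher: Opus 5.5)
Your proof is correct, but it follows a different route from the paper's. The paper never estimates the one-step error at second order. It only checks that $F_\lambda(s)=e^{sA_0}e^{sB_\lambda}$ is a Chernoff family ($F_\lambda(0)=\mathbf{1}$, $F_\lambda'(0)=A_0+B_\lambda$, together with the obvious exponential growth bound). It then invokes the norm-topology Chernoff theorem for bounded generators, already verified in the programme's earlier library. Finally it passes from norm to strong convergence through the continuous evaluation map $T\mapsto T\psi$. This is essentially the alternative you sketch in your last paragraph.

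Your telescoping argument instead exploits the exponential form of both factors to obtain an $O(h^2)$ local error. It is therefore fully self-contained and works in any Banach algebra. It also delivers an explicit rate, $\|P_N^N-e^{t(A_0+B_\lambda)}\|=O(1/N)$, with a constant depending only on $t$, $\|A_0\|$ and $\|B_\lambda\|$.

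The paper's route uses only first-order information at $s=0$ (an $o(s)$ defect), so in general it yields no rate. In exchange, it applies verbatim to any other Chernoff family, such as other product approximants, and it reuses an already machine-checked theorem rather than redoing the estimate. It is also the template that extends, under stability and core hypotheses, to the unbounded setting the paper defers to future work, where a norm bound on $P_N-Q_N$ like your Step~2 is no longer available.
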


The proof, in the form imported here from the verified library
\cite{BerdinskyUshakov2026}, proceeds by showing that $F_\lambda$ is a Chernoff family
($F_\lambda(0)=\mathbf{1}$, $F_\lambda'(0)=A_0+B_\lambda$) and applying the norm-topology
Chernoff theorem for bounded generators, then pushing the convergence through the
continuous evaluation map $T\mapsto T\psi$. In Lean the statement is
\lean{nonGaussianChernoffSplitting}, a direct consequence of
\lean{HkChernoffStrong.strong\_lieTrotter}. Figure~\ref{fig:chernoff} shows the mechanism.

\begin{figure}[htbp]
\centering
\begin{tikzpicture}[
  arr/.style={-{Latex[length=2mm]},thick},
  every node/.style={font=\small}
]
% exact evolution
\draw[arr,blue,very thick] (0,2.2) -- (10,2.2)
   node[midway,above]{exact: $e^{t(A_0+B_\lambda)}\psi$};
\fill[blue] (0,2.2) circle (1.6pt);
\fill[blue] (10,2.2) circle (1.6pt);
% split evolution
\foreach \i in {0,1,2,3,4}{
  \draw[arr,red] (2*\i,0.8) -- (2*\i+1,0.8);
  \draw[arr,green!55!black] (2*\i+1,0.8) -- (2*\i+2,0.8);
}
\fill (0,0.8) circle (1.6pt);
\fill (10,0.8) circle (1.6pt);
\node[red] at (1,0.35) {$e^{\frac{t}{N}A_0}$};
\node[green!55!black] at (3,0.35) {$e^{\frac{t}{N}B_\lambda}$};
\node at (7.4,0.35) {$\cdots$ $N$ alternating steps $\cdots$};
\node[align=center] at (5,-0.7)
 {$[A_0,B_\lambda]\neq0$: each step commits an error $O(N^{-2})$,\\
  $N$ steps accumulate $O(N^{-1})\to0$};
\draw[arr,gray,dashed] (10,0.8) -- (10,2.1);
\node[gray] at (11.1,1.5) {$N\to\infty$};
\end{tikzpicture}
\caption{Chernoff splitting for non-commuting generators. Alternating the free and the
non-Gaussian exponential on a mesh of size $t/N$ produces, in the limit $N\to\infty$,
the exact evolution, in the strong operator topology.}
\label{fig:chernoff}
\end{figure}

\begin{remark}[Why boundedness matters]
All the operators occurring here are bounded, which is what makes the machine-verified
proof elementary: the exponential series converges in operator norm and all estimates are
uniform. For unbounded generators the statement remains true under stability and core
assumptions, but its proof requires the Trotter--Kato theory of $C_0$-semigroups; this is
recorded as a conjecture in the accompanying Lean development and left for future work.
\end{remark}

\section{Illustrative applications}
\label{sec:applications}

The following four subsections are \emph{not} formalised in Lean; they show that the
apparatus of Sections~\ref{sec:onemode}--\ref{sec:chernoff} is the natural language for
several unrelated fields. In each case the model reduces, after a suitable change of
variables, to the quartic weight $e^{-\frac{\omega}{2}\varphi^{2}+j\varphi-\lambda\varphi^{4}}$.

\subsection{The Duffing oscillator}

The classical Duffing oscillator with additive white noise,
\[
\ddot x+\gamma\dot x+x+4\lambda x^{3}=\sqrt{2\gamma T}\,\xi(t),
\]
has the stationary distribution
$p(x)\propto e^{-\frac{1}{T}(x^{2}/2+\lambda x^{4})}$. With $\omega=1/T$ and
$\lambda\to\lambda/T$ the normalising constant is exactly $I(\omega,0,\lambda)$, and the
moments are
\[
\langle x^{2}\rangle_\lambda=\frac{M_2(\omega,0,\lambda)}{M_0(\omega,0,\lambda)},
\qquad
\partial_\lambda\langle x^{2}\rangle\big|_{0}
=-\frac{M_6M_0-M_2M_4}{M_0^{2}}\bigg|_{\lambda=0}
=-\big(15-3\big)T^{3}=-12\,T^{3},
\]
using $M_{2k}(1/T,0,0)=(2k-1)!!\,T^{k}\sqrt{2\pi T}$. The first anharmonic correction
therefore \emph{reduces} the variance, as expected for a stiffening spring.

In the gauge notation of Sections~\ref{sec:onemode}--\ref{sec:influence} the
stationary partition function is the one-mode HK integral
\begin{multline*}
\int_{\R} \exp\!\Big(-\frac{1}{T}\Big(\frac{x^{2}}{2}+\lambda x^{4}\Big)\Big)\,\mathcal{D}x\\
=\int_{\R} \exp\!\Big(-\frac{1}{T}\Big(\frac{x^{2}}{2}+\lambda x^{4}\Big)\Big)\,
   \mathcal{D}_{\mathrm{HK}}x
= I(1/T,\,0,\,\lambda/T),
\end{multline*}
and the first-order response of the variance to the coupling is
\[
\partial_\lambda\langle x^{2}\rangle_\lambda\big|_{0}=-12\,T^{3}.
\]

\subsection{Local volatility (CEV) in finance}

In the Black--Scholes model the log-price has Gaussian transition densities; the
constant-elasticity-of-variance (CEV) and other local-volatility models replace the
constant $\sigma$ by $\sigma(x)=\sigma_0+\varepsilon f(x)$. Writing the transition kernel
as a functional integral and expanding the action to fourth order in the deviation from
the Black--Scholes path produces exactly a weight of the form
$e^{-\frac{\omega}{2}\varphi^{2}+j\varphi-\lambda\varphi^{4}}$ with
$\lambda\propto\varepsilon^{2}$.

In a basis in which the Gaussian part is diagonal, with eigenvalues
$\omega_1,\dots,\omega_n>0$ and sources $j_i$, Theorem~\ref{thm:factor} evaluates the
influence functional explicitly:
\begin{multline*}
\int_{\R^{n}}\exp\!\Big(-\frac12\varphi^{T}A\varphi+j\cdot\varphi
  -\lambda\sum_{i=1}^{n}\varphi_i^{4}\Big)\,\dphi\\
=\int_{\R^{n}}\exp\!\Big(-\frac12\varphi^{T}A\varphi+j\cdot\varphi
  -\lambda\sum_{i=1}^{n}\varphi_i^{4}\Big)\,\dhk
=\prod_{i=1}^{n} I(\omega_i,j_i,\lambda).
\end{multline*}
 Theorem~\ref{thm:diff} then yields the first-order
smile correction
\[
\partial_\lambda \log \mathcal{F}_\lambda(j)\big|_{\lambda=0}
=-\sum_i \frac{M_4(\omega_i,j_i,0)}{M_0(\omega_i,j_i,0)},
\]
which is a convergent expression, in contrast with the term-by-term expansion of the
exponential of the quartic term.

\subsection{Wilson--Cowan neural fields}

The Wilson--Cowan equations for a neural field use a sigmoidal transfer function
$S(\varphi)=(1+e^{-\varphi})^{-1}$ or $\tanh$. Around the operating point,
\[
S(\varphi)\approx \varphi-\tfrac16\varphi^{3},
\]
so that a quadratic cost $\|S(\varphi)\|^{2}$ generates, after squaring, a
$-\tfrac13\varphi^{4}$ term in the effective action, i.e.\ precisely the quartic weight
of \eqref{eq:onemode} with $\lambda=\tfrac13$ in appropriate units. The stationary
statistics of the fluctuating field are then governed by $\mathcal{F}_\lambda(j)$, and
the machinery of Section~\ref{sec:influence} gives finite, controlled corrections to the
Gaussian (linear-response) approximation.

Explicitly, the stationary weight is quartic with $\lambda=\tfrac13$, and
\begin{multline*}
\int_{\mathcal{H}}\exp\!\Big(-\frac12\varphi^{T}A\varphi+j\cdot\varphi
  -\frac13\sum_{i}\varphi_i^{4}\Big)\,\dphi\\
=\int_{\mathcal{H}}\exp\!\Big(-\frac12\varphi^{T}A\varphi+j\cdot\varphi
  -\frac13\sum_{i}\varphi_i^{4}\Big)\,\dhk
=\prod_{i} I\big(\omega_i,j_i,\tfrac13\big).
\end{multline*}

\subsection{Non-Gaussian quantum reservoirs}

Consider a two-level system linearly coupled to a bath of $M$ anharmonic modes with
Hamiltonian
\[
H=H_S+\sum_i\Big(\frac{p_i^{2}}{2}+\frac{\omega_i^{2}}{2}q_i^{2}+\lambda q_i^{4}\Big)
+\sigma_z\sum_i g_i q_i .
\]
Integrating out the bath in the influence-functional formalism produces exactly the
object \eqref{eq:influence} with $j_i=g_i$ times the system variable. Then:
\begin{itemize}
\item Theorem~\ref{thm:influencebounds} guarantees that the influence functional is
finite and bounded by its harmonic counterpart, so the anharmonicity cannot produce
divergent decoherence rates;
\item Theorem~\ref{thm:diff} gives the leading anharmonic correction to the decoherence
functional as $-\lambda\sum_i M_4(\omega_i,g_i,0)/M_0(\omega_i,g_i,0)+o(\lambda)$;
\item Theorem~\ref{thm:chernoff} justifies the numerical splitting of the reduced
dynamics into free and interaction steps.
\end{itemize}

Explicitly, the influence functional of the qubit--bath system is
\begin{multline*}
\int_{\R^{2M}}\exp\!\Big(-\frac12 q^{T}\Omega q+\sigma_z\,g\cdot q
  -\lambda\sum_{i}q_i^{4}\Big)\,\mathcal{D}q\\
=\int_{\R^{2M}}\exp\!\Big(-\frac12 q^{T}\Omega q+\sigma_z\,g\cdot q
  -\lambda\sum_{i}q_i^{4}\Big)\,\mathcal{D}_{\mathrm{HK}}q
=\mathcal{F}_\lambda(g),
\end{multline*}
where $\mathcal{F}_\lambda(g)=\prod_i I(\omega_i,g_i,\lambda)$, and the leading
anharmonic correction reads
\[
\partial_\lambda \log\mathcal{F}_\lambda(g)\big|_{\lambda=0}
=-\sum_i \frac{M_4(\omega_i,g_i,0)}{M_0(\omega_i,g_i,0)}.
\]

\section{The Lean 4 formalisation}
\label{sec:lean}

\subsection{Structure of the development}

The formal development is a Lean~4 project built on Mathlib. The new file of the present
work is \lean{HkNonGaussian.lean}; it imports and builds upon the earlier files of the
programme:

\begin{center}
\begin{tabular}{@{}ll@{}}
\toprule
File & Role\\
\midrule
\lean{HkFreeField.lean} & Gaussian HK integral, determinant formula, two-point function\\
\lean{HkPathIntegral.lean} & HK path integral on Kuelbs--Steadman spaces, time slicing\\
\lean{HkTrotter.lean} & Trotter product formula and slicing estimates\\
\lean{HkChernoffBounded.lean} & Chernoff theorem for bounded generators (norm topology)\\
\lean{HkChernoffStrong.lean} & Strong (SOT) Chernoff and Lie--Trotter formulas\\
\lean{HkNonGaussian.lean} & \textbf{this work}: non-Gaussian one-mode and $M$-mode results\\
\bottomrule
\end{tabular}
\end{center}

\subsection{Dictionary: mathematics $\leftrightarrow$ Lean}

\begin{center}
\begin{tabular}{@{}lll@{}}
\toprule
Statement & Lean declaration & Reference\\
\midrule
weight $e^{-\frac\omega2\varphi^2+j\varphi-\lambda\varphi^4}$
 & \lean{oneModeWeight} & \eqref{eq:onemode}\\
one-mode integral $I(\omega,j,\lambda)$
 & \lean{nonGaussianOneModeIntegral} & \eqref{eq:onemode}\\
moments $M_m$
 & \lean{oneModeMoment} & \eqref{eq:moments}\\
integrability of $\varphi^{k}w$
 & \lean{integrable\_pow\_mul\_oneModeWeight} & Thm.~\ref{thm:finite}\\
finiteness of $I$
 & \lean{nonGaussianOneModeFinite} & Thm.~\ref{thm:finite}\\
$I>0$
 & \lean{nonGaussianOneModePos} & Prop.~\ref{prop:posmono}\\
$\lambda\mapsto I$ non-increasing
 & \lean{nonGaussianOneModeAntitone} & Prop.~\ref{prop:posmono}\\
$I(\omega,j,0)=\sqrt{2\pi/\omega}e^{j^2/2\omega}$
 & \lean{nonGaussianOneModeFreeValue} & Prop.~\ref{prop:posmono}\\
$\partial_\lambda M_m=-M_{m+4}$, $\lambda>0$
 & \lean{hasDerivAt\_oneModeMoment} & Thm.~\ref{thm:diff}(1)\\
$C^\infty$ on $(0,\infty)$
 & \lean{nonGaussianOneModeSmooth} & Thm.~\ref{thm:diff}(2)\\
derivative on $[0,\infty)$
 & \lean{nonGaussianOneModeDifferentiable} & Thm.~\ref{thm:diff}(3)\\
right derivative at $\lambda=0$
 & \lean{nonGaussianOneModeDerivAtZero} & \eqref{eq:derzero}\\
influence functional $\mathcal{F}_\lambda(j)$
 & \lean{nonGaussianInfluenceFunctional} & \eqref{eq:influence}\\
factorisation
 & \lean{nonGaussianInfluenceFactorises} & Thm.~\ref{thm:factor}\\
positivity and Gaussian bound
 & \lean{nonGaussianInfluenceBounded} & Thm.~\ref{thm:influencebounds}\\
Gaussian value of $\mathcal{F}_0$
 & \lean{nonGaussianInfluenceFreeValue} & Thm.~\ref{thm:influencebounds}\\
Chernoff splitting
 & \lean{nonGaussianChernoffSplitting} & Thm.~\ref{thm:chernoff}\\
\bottomrule
\end{tabular}
\end{center}

\subsection{Two representative excerpts}

The definition and the finiteness statement read, in Lean,
\begin{quote}\footnotesize\ttfamily
def oneModeWeight (om j lam x : $\mathbb{R}$) : $\mathbb{R}$ :=\\
\hspace*{2em}Real.exp (-(om / 2) * x \textasciicircum{} 2 + j * x - lam * x \textasciicircum{} 4)\\[1mm]
def nonGaussianOneModeIntegral (om j lam : $\mathbb{R}$) : $\mathbb{R}$ :=\\
\hspace*{2em}$\int$ x : $\mathbb{R}$, oneModeWeight om j lam x\\[1mm]
theorem nonGaussianOneModeFinite (om j lam : $\mathbb{R}$) (hom : 0 < om) (hlam : 0 $\le$ lam) :\\
\hspace*{2em}Integrable (oneModeWeight om j lam)
\end{quote}
and the boundary derivative is
\begin{quote}\footnotesize\ttfamily
theorem nonGaussianOneModeDifferentiable (om j : $\mathbb{R}$) (hom : 0 < om) \{lam : $\mathbb{R}$\}\\
\hspace*{2em}(hlam : 0 $\le$ lam) :\\
\hspace*{2em}HasDerivWithinAt (fun l => nonGaussianOneModeIntegral om j l)\\
\hspace*{4em}(-oneModeMoment om j lam 4) (Ici 0) lam
\end{quote}

\subsection{Verification status}

The file \lean{HkNonGaussian.lean} compiles against Mathlib without errors, contains no
occurrence of \lean{sorry} or \lean{admit}, introduces no axioms of its own, and each of
the theorems listed above depends only on the three standard axioms of Lean's kernel
(\lean{propext}, \lean{Classical.choice}, \lean{Quot.sound}).

\section{Conclusion and outlook}
\label{sec:conclusion}

We have shown that the Henstock--Kurzweil gauge integral, combined with Chernoff product
approximations, provides a rigorous and \emph{machine-verified} route to non-Gaussian
functional integrals for finitely many bosonic modes. The route is direct: no Wick
rotation, no Wiener measure, no $\zeta$-regularisation and no analytic continuation are
required; the formal symbol $\dphi$ is given a meaning by cylindrical partitions and a
gauge, and the resulting integrals are finite, positive, factorising and smooth in the
coupling. Crucially, smoothness is obtained by differentiation under the integral sign
rather than by a perturbative series that has zero radius of convergence.

Importantly, the strong Chernoff theorem used here is formulated and proved for an
abstract complex Hilbert space and does not rely on finite dimensionality. Hence, in
passing from finitely many modes \(M\) to an infinite-dimensional Kuelbs--Steadman space,
the time evolution will still converge provided the nonlinear generator \(B_\lambda\) is
defined on a suitable domain and generates a semigroup. Thus the principal barrier to the
continuum limit lies not in the Chernoff splitting but in the construction of an
infinite-dimensional cylindrical HK measure and in proving existence of the
non-absolutely convergent limit of the influence functional \(\mathcal{F}_\lambda(j)\)
as \(M\to\infty\). The present finite-dimensional theory supplies all necessary
projective bounds and is therefore the natural first step.

Four directions suggest themselves.

\begin{enumerate}
\item \emph{Continuum limit.} Passing from $M$ modes to a field requires uniform control
of the constants in Theorem~\ref{thm:influencebounds} as $M\to\infty$, together with the
Kuelbs--Steadman embedding; the $\zeta$-regularised determinant then reappears as the
limit of the finite products.
\item \emph{Fermionic analogue.} Grassmann variables have no positivity to exploit, but
their integrals are finite sums; a gauge formulation combining both sectors would cover
realistic supersymmetric models.
\item \emph{Complex couplings.} For $\lambda\in\C$ with $\mathrm{Re}\,\lambda\ge0$ the
weight is bounded but no longer positive; the HK integral is the natural framework, since
absolute integrability is not required.
\item \emph{Fully non-perturbative regimes.} Instanton-dominated situations, where the
$\lambda\downarrow 0$ asymptotics is not the whole story, are precisely those in which the
distinction between smoothness and analyticity of $\lambda\mapsto I(\omega,j,\lambda)$
becomes physically visible.
\end{enumerate}

\section*{Acknowledgements}
The author thanks the developers of Lean~4 and of Mathlib, whose libraries made the
formal part of this work possible.

\end{document}